\newcommand{\nR}{\mathbb{R}}
\newtheorem{theorem}{Theorem}
\newtheorem{corollary}{Corollary}
\newtheorem{definition}{Definition}
\newtheorem{observation}{Observation}
\newtheorem{lemma}{Lemma}
\newcommand{\leaveout}[1]{}
\renewcommand{\medskip}{\smallskip}
\renewcommand{\int}{{\ensuremath{\rm int\,}}}
\date{}
\title{On triangulating $k$-outerplanar graphs}
\author{Therese Biedl\thanks{David R.~Cheriton School of Computer
Science, University of Waterloo, Waterloo, Ontario N2L 1A2, Canada.
Supported by NSERC and the Ross and Muriel Cheriton Fellowship.}
}
\begin{document}

\maketitle
\begin{abstract}
A $k$-outerplanar graph is a graph that can be drawn in the plane
without crossing such that after $k$-fold removal of the vertices on the 
outer-face there are no vertices left.  In this paper, we study how
to triangulate a $k$-outerplanar graph while keeping its outerplanarity
small.  Specifically, we show that not all $k$-outerplanar graphs can
be triangulated so that the result is $k$-outerplanar, but they can
be triangulated so that the result is $(k+1)$-outerplanar.
\end{abstract}

\section{Introduction}

A {\em planar graph} is a graph $G=(V,E)$ that can be drawn in
the plane without crossing.  Given such a drawing $\Gamma$, the
{\em faces} are the connected pieces of $\nR^2-\Gamma$; the
unbounded piece is called the {\em outer-face}.  A planar
drawing can be described by
giving for each vertex the clockwise order of edges at it, and
by saying which edges are incident to the outer-face; we call this
a {\em combinatorial embedding}.

Assume that a planar drawing $\Gamma$ has been fixed.
Define $L_1$ to be the vertices incident to the outer-face, and
define $L_i$ for $i>1$ recursively to be the
vertices on the outer-face of the planar drawing obtained when 
removing the vertices in $L_1,\dots,L_{i-1}$.  We call  $L_i$ (for $i\geq 1$)
the {\em $i$th onion peel}  of drawing $\Gamma$.
A graph is called {\em $k$-outerplanar} if it has a planar drawing that
has most $k$ onion peels.  The {\em outer-planarity} of a planar graph $G$ is
the smallest $k$ such that $G$ is $k$-outerplanar.

A {\em triangulated graph} is a planar graph for which all faces (including
the outer-face) are triangles.  A {\em triangulated disk} is a planar
graph for which the outer-face is a simple cycle and all inner faces
(i.e., faces that are not the outer-face) are triangles.
It is well-known that any planar graph can be {\em triangulated},
i.e., we can add edges to it without destroying planarity so that
it becomes triangulated.

Sometimes it is of interest to triangulate a planar graph while
maintaining other properties.  For example, any planar graph without
separating triangles can be triangulated without creating separating
triangles \cite{BKK97}, with the exception of graphs with a universal
vertex.
Any planar graph can be triangulated
so that the maximum degree increases by at most a constant
\cite{KB97}.  Any planar graph $G$ can be triangulated such
that the result has treewidth at most $\max\{3,tw(G)\}$ \cite{BR13}.
Also, following the proof of Heawood's 3-color
theorem \cite{Hea1898}, one can easily show that any 3-colorable
planar graph can be made triangulated by adding edges and vertices
such that the result is 3-colorable. 

In this paper, we investigate whether a planar graph
can be triangulated without changing its outer-planarity.  We
show first that this is not true.  For example, a 4-cycle has
outer-planarity 1, but the only way to triangulate it is to create
$K_4$, which has outer-planarity 2.  (We give more complicated
examples for higher outer-planarity in Section~\ref{se:not_possible}.)
However, if we are content with ``only'' converting the graph to a
triangulated disk, then it is
always possible to do so without increasing the outer-planarity
(see Section~\ref{se:triangulatedDisk}).
In consequence, any $k$-outerplanar graph can be triangulated
so that its outer-planarity is at most $k+1$.  
In Section~\ref{se:appl} we use our triangulations to give a
different proof of the well-known result \cite{Bod98}
that $k$-outerplanar graphs have treewidth at most $3k-1$. 

\section{Triangulating $k$-outerplanar graphs}
\label{se:not_possible}

In this section, we show that not all planar graphs can be
triangulated while maintaining the outer-planarity.

\begin{theorem}
For any $k\geq 1$, there exists a triangulated disk $G$ with $O(k)$ vertices
that is $k$-outerplanar, but any triangulation
of $G$ has outer-planarity at least $k+1$.
\end{theorem}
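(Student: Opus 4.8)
The plan is to construct $G$ recursively, nesting the basic obstruction $C_4\rightsquigarrow K_4$. For $k=1$ take $G_1$ to be a $4$-cycle together with one chord: this is a maximal outerplanar graph and a triangulated disk, its only triangulation is $K_4$, and since every triangulated graph on at least four vertices has minimum degree at least $3$ while every outerplanar graph has a vertex of degree at most $2$, that triangulation is $2$-outerplanar. For $k>1$ I would obtain $G_k$ by ``wrapping'' a re-embedded copy of $G_{k-1}$: take a fresh $4$-cycle $C=a_1a_2a_3a_4$ with chord $a_1a_3$, keep the triangle $a_1a_3a_4$ empty, and inside the triangle $a_1a_2a_3$ place $G_{k-1}$ surrounded by a triangulated annulus joining $C$ to the outer boundary of $G_{k-1}$. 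This keeps $G_k$ a triangulated disk with $O(k)$ vertices, and its unique triangulation adds the second diagonal $a_2a_4$ of $C$, promoting $\{a_1,a_2,a_3,a_4\}$ to a $K_4$.

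Next I would prove the two halves. For the upper bound, embedding $G_k$ with the $4$-gon $C$ as outer face peels off $\{a_1,a_2,a_3,a_4\}$ first; after deleting these vertices only $G_{k-1}$ survives, so by induction $G_k$ is $k$-outerplanar (a separate, more careful nested-cycle argument, together with the fact that $G_k$ contains $G_{k-1}$, is needed to see that it is not $(k-1)$-outerplanar, so that the example is tight). For the lower bound, let $G_k^{+}$ be any triangulation. Since $G_k^{+}$ is triangulated on at least four vertices it is $3$-connected, so its planar embedding is fixed up to the choice of outer face, and it suffices to show that for every choice of outer face some vertex is left at onion-depth $k+1$. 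I would do this by induction on $k$, keeping as a witness a vertex $z$ lying deep inside the recursive copy $G_{k-1}^{+}$: in $G_k^{+}$ this copy is ``guarded'' both by the triangulated annulus and by the edge $a_2a_4$, and the same bookkeeping as in the $K_4$ base case should show that wherever the outer face is placed, reaching $z$ costs one more peel than in $G_{k-1}^{+}$.

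The step I expect to be the real obstacle is exactly this last point: controlling the freedom to re-embed $G_k^{+}$. One must rule out outer faces that shortcut several nesting layers --- most dangerously one of the two empty triangular faces created by the diagonal $a_2a_4$, since using it as the outer face deletes two boundary vertices of the recursive copy at once and thus advances its peeling by a whole layer --- and one must also rule out placing the outer face deep inside $G_{k-1}^{+}$, which makes the wrapping $4$-cycle shallow. The construction therefore has to be arranged so that no single face can be simultaneously close to the deep recursive copy and close to the outer $4$-cycle, and pinning down this rigidity is where the work concentrates; in particular it is why a plain concentric nesting of $k$ many $4$-cycles does not work, since such a graph admits a balanced ``middle'' embedding of outer-planarity only about $k/2$, and that embedding moreover survives triangulation.
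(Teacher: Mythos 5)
There is a genuine gap, and you have in fact put your finger on it yourself without closing it. Your $G_k$ is a single linear chain of nested $4$-cycles joined by triangulated annuli, and its triangulation $G_k^{+}$ is exactly the kind of graph your last sentence warns about: for any inner face $f$ of $G_k^{+}$ lying in the annulus between nesting levels $j$ and $j-1$, re-embedding with $f$ as the outer face lets the peeling proceed simultaneously outward (toward $a_1a_2a_3a_4$) and inward (toward the innermost $K_4$), so the outer-planarity of that embedding is roughly $\max(j,k-j)+O(1)$, which for $j\approx k/2$ is about $k/2$. In a single chain it is impossible to ``arrange the construction so that no face is simultaneously close to the deep copy and close to the outer $4$-cycle'': every face sits at some depth $d$ in the chain, and the middle face is at depth about $k/2$ from both ends. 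So the lower bound $k+1$ is simply false for your $G_k^{+}$ once $k$ is moderately large, and no amount of bookkeeping in the inductive step can rescue it. (Your base case and the upper bound half are fine, and the observation that $G_k^{+}$ is $3$-connected, so only the choice of outer face is free, is the right starting point.)

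The missing idea, which is how the paper proceeds, is to break the symmetry between ``inside'' and ``outside'' by using \emph{several disjoint deep gadgets glued around a common center} rather than one chain. The paper takes a $3$-connected gadget $T_k$ of nesting depth $k$ ($k$ nested triangles joined by $6$-cycles; your nested $4$-cycles with annuli would serve equally well) and glues \emph{four} copies of it side by side, each attached to the rest only along its outermost triangle $t_k$. Any triangulation has a triangle as its outer face, hence only three outer vertices, so by pigeonhole some copy $C$ has no vertex on the outer face; since $C$ is $3$-connected and meets the rest of the graph only at $t_k$, its induced embedding must have $t_k$ outermost, and peeling $C$ then costs $k$ further rounds after the first peel, giving outer-planarity at least $k+1$ for \emph{every} choice of outer face. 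This is precisely the rigidity your proposal needs and does not supply.
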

\begin{proof}
For $k=1$, the graph $K_4$ with one edge deleted is a suitable example.
For $k>1$, we first define an auxiliary graph $T_i$ as follows.  $T_1$ consists
of a single triangle $t_1$.  
$T_i$, for $i>1$, is obtained by taking a triangle $t_i$
and inserting a copy of $T_{i-1}$ inside it; then add a 6-cycle between
triangles $t_i$ and $t_{i-1}$.  In other
words, $T_i$ consists of $i$ nested triangles. 
Clearly graph $T_i$ is 3-connected and 
has $i$ onion peels if $t_i$ is the outer-face.
See Figure~\ref{fig:outpl_increases} (left).

We now define graph $G$ to consist of four copies of $T_k$, in the embedding
with $t_k$ on the outer-face, and connect them so that the outer-face contains
two vertices of each copy of $t_k$.  The inner faces ``between'' the four
copies of $T_k$ are triangulated arbitrarily.  
See Figure~\ref{fig:outpl_increases} (right).
Notice that the first and second onion
peel will contain (in each copy of $T_k$) all vertices of $t_k$ and $t_{k-1}$.
Therefore the $i$th onion
peel (for $2\leq i\leq k$) contains $t_{k-i}$ 
and hence $G$ is $k$-outerplanar.  It is also a triangulated disk
and has $12k$ vertices.

\begin{figure}[ht]
\hspace*{\fill}
\input{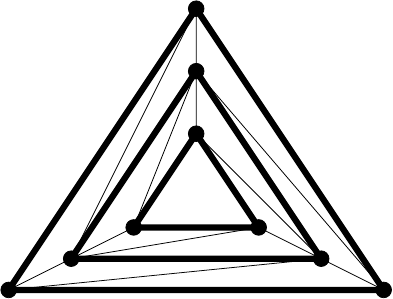_t}
\hspace*{\fill}
\input{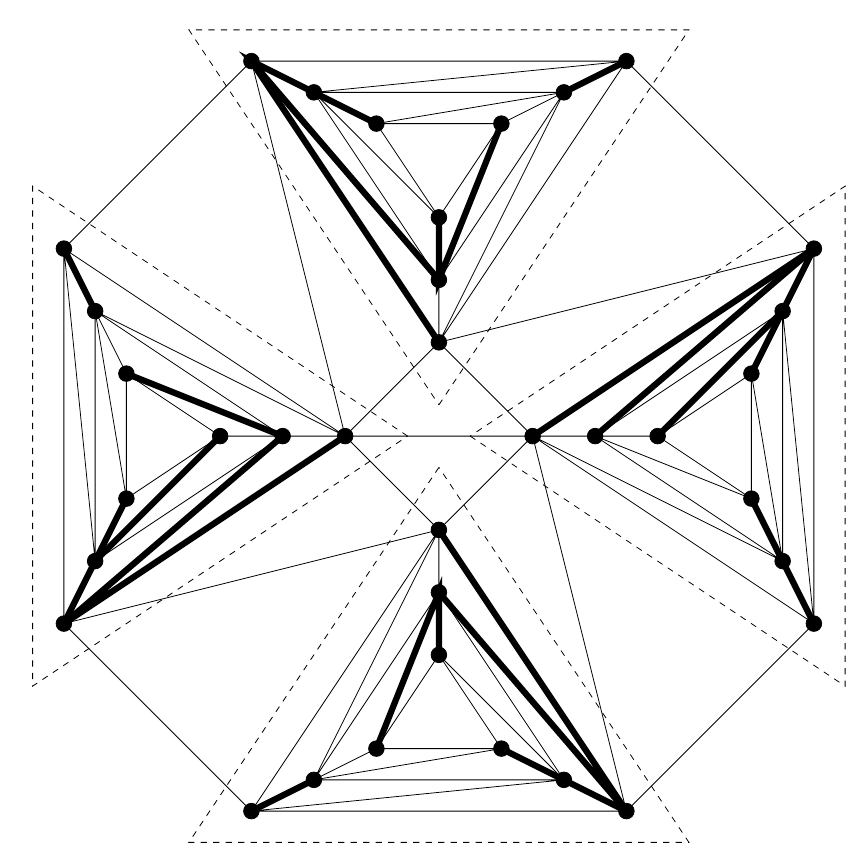_t}
\hspace*{\fill}
\caption{(Left) Graph $T_3$.  (Right) 
A 3-outer planar graph which cannot be triangulated and stay 3-outerplanar.
Thick edges indicate an outer-face-rooted spanning forest of height 2 (defined
formally in Section~\ref{se:triangulated_disk}.}
\label{fig:outpl_increases}
\end{figure}

Now let $G'$ be any triangulation of $G$.  Since there are three vertices
on the outer-face of $G'$, there exists one copy $C$ of $T_k$ that does not
have any vertex on the outer-face.  In consequence (since $T_k$ is
3-connected), the embedding of $C$ induced by $G'$ must have $t_k$ as its
outer-face.  The first onion peel of $G'$ contains no vertex of $C$.
In consequence, at least $k+1$ onion peels are required before all vertices
of $C$ are removed, and the outer-planarity of $G'$ is at least $k+1$.
\end{proof}

\section{Converting to triangulated disks}
\label{se:triangulated_disk}
\label{se:triangulatedDisk}

In this section, we aim to show that we can triangulate inner faces
without increasing the outer-planarity.  To our knowledge, this result
was not formally described in the literature before (though Lemma 3.11.1
in \cite{Bod86} has many of the crucial steps for it.)
From now on, let $G$ be a $k$-outerplanar graph with the planar
embedding and outer-face fixed such that it has onion peels
$L_1,L_2,\dots,L_k$.  We first compute a special spanning forest of $G$
(after adding some edges).  We need some preliminary results

\begin{observation}
\label{obs:neighbourOnPreviousOnionPeel}
If $v\in L_i$ (for some $i>1$),
then some incident face of $v$ contains vertices in $L_{i-1}$.
\end{observation}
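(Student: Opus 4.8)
The plan is to argue by contradiction using the recursive definition of onion peels. Let $v\in L_i$ for some $i>1$, and let $\Gamma_{i-1}$ denote the planar drawing obtained from $\Gamma$ by deleting the vertices in $L_1,\dots,L_{i-2}$ (with $L_0=\emptyset$ so that $\Gamma_1=\Gamma$). By definition $L_{i-1}$ is the set of vertices on the outer-face of $\Gamma_{i-1}$, and $L_i$ is the set of vertices on the outer-face of the drawing $\Gamma_i$ obtained from $\Gamma_{i-1}$ by further deleting $L_{i-1}$. So $v$ lies on the outer-face $f^*$ of $\Gamma_i$. I would assume, for contradiction, that no face of $v$ in the original drawing $\Gamma$ contains a vertex of $L_{i-1}$.

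First I would observe that since $i>1$, the vertex $v$ is \emph{not} on the outer-face of $\Gamma_{i-1}$ (it got removed only at a later peel), so in $\Gamma_{i-1}$ every face incident to $v$ is an inner face. Now trace what happens to the faces around $v$ when we pass from $\Gamma_{i-1}$ to $\Gamma_i$ by deleting $L_{i-1}$. The outer-face $f^*$ of $\Gamma_i$ containing $v$ is formed by merging the outer-face of $\Gamma_{i-1}$ with all the faces that get ``opened up'' when the vertices of $L_{i-1}$ (and their incident edges) are erased. For $v$ to end up on the boundary of this merged region, at least one face incident to $v$ in $\Gamma_{i-1}$ must have had a vertex of $L_{i-1}$ on its boundary — otherwise every face around $v$ survives intact into $\Gamma_i$ as an inner face and $v$ stays strictly inside $\Gamma_i$, contradicting $v\in L_i$. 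Here I should also handle the degenerate case where deleting $L_{i-1}$ disconnects the graph or leaves $v$ isolated; in that case $v$ itself lies in the closure of a region bounded by $L_{i-1}$-vertices, which again forces an incident face with an $L_{i-1}$-vertex.

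The remaining gap is that a face incident to $v$ in the \emph{reduced} drawing $\Gamma_{i-1}$ need not be a face of $v$ in the \emph{original} drawing $\Gamma$: deleting $L_1,\dots,L_{i-2}$ can merge several original faces around $v$ into one. However, any face $f$ of $\Gamma_{i-1}$ incident to both $v$ and some $w\in L_{i-1}$ is a union of original faces of $\Gamma$, and the original faces incident to $v$ that make up $f$ form a contiguous block in the rotation at $v$; walking along the boundary of $f$ from $v$ one reaches $w$, and at the first moment the boundary walk ``enters'' the block of original faces at $v$, the corresponding original face is incident to $v$ and has on its boundary some vertex that was on the outer-face at an earlier stage, hence in $L_j$ for some $j\le i-1$; the innermost such $j$ is exactly $i-1$ for the face that actually touches a surviving $L_{i-1}$-vertex. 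Making this ``contiguous block'' bookkeeping precise is the one genuinely fiddly point; everything else is a direct unwinding of the recursive definition. I expect this step — correctly relating faces of the peeled drawing to faces of the original drawing — to be the main obstacle, and it is handled cleanly by noting that onion peel $L_{i-1}$ is, by construction, adjacent (in $\Gamma_{i-1}$) to the boundary of $L_{i-2}$, so the nesting of peels is reflected face-by-face.
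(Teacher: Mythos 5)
Your core argument coincides with the paper's: a vertex $v\in L_i$ with $i>1$ is not on the outer-face of the subdrawing induced by $L_{i-1}\cup L_i\cup\cdots$, so every face incident to $v$ there is an inner face; since $v$ \emph{is} on the outer-face once $L_{i-1}$ is also deleted, one of those faces must merge with the outer-face at that step, and a face can only merge that way if its boundary contains a vertex of $L_{i-1}$. The additional issue you flag in your last paragraph --- that a face of the peeled drawing $\Gamma_{i-1}$ need not be a face of the original drawing $\Gamma$ --- is worth checking (the paper is silent on it), but your ``contiguous block'' patch is both unnecessary and, as written, incomplete: it only yields an original face of $v$ meeting some $L_j$ with $j\le i-1$, and the assertion that ``the innermost such $j$ is exactly $i-1$'' is not justified. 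The clean fix is this: when the vertices of the current outer-face are deleted, the only faces that change are those incident to a deleted vertex, and every such face merges into the new outer-face, because a deleted vertex lies on the old outer-face and erasing it (together with its edges) connects each of its incident faces to the outer region. Hence, by induction over the peels, every \emph{inner} face of $\Gamma_{i-1}$ is already a face of $\Gamma$; in particular the face of $v$ that merges with the outer-face when $L_{i-1}$ is removed is a genuine face of $G$ containing a vertex of $L_{i-1}$, which is exactly what Observation~\ref{obs:neighbourOnPreviousOnionPeel} needs for its later use. With that substitution your proof is correct and follows the same route as the paper's.
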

\begin{proof}
Since $v$ is in $L_i$ and not in $L_{i-1}$, it is not on the outer-face
of the graph $H_{i-1}$ induced by $L_{i-1}\cup L_i \cup L_{i+1}\cup \dots$.
Therefore all incident faces of $v$ (in $H_{i-1}$)
are inner faces.  But since $v$ {\em is} on the outer-face after
deleting $L_{i-1}$, at least one of its incident faces merges with
the outer-face when removing $L_{i-1}$.  Therefore at least one
incident face of $v$ contains a vertex from $L_{i-1}$.
\end{proof}

\begin{observation}
\label{obs:makeNeighbourToPreviousOnionPeel}
We can add edges 
(while maintaining planarity) such that every vertex in $L_i$, $i>1$
has a neighbor in $L_{i-1}$.
\end{observation}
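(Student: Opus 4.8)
The plan is to invoke Observation~\ref{obs:neighbourOnPreviousOnionPeel} for every deficient vertex, add the edge it promises, and argue that all these edges can be drawn simultaneously without crossings. So, for each vertex $v\in L_i$ with $i>1$ that has no neighbor in $L_{i-1}$, I would fix (using Observation~\ref{obs:neighbourOnPreviousOnionPeel}) an incident face $f$ of $v$ whose boundary contains a vertex of $L_{i-1}$; the edge to be added joins $v$ to such a vertex and is drawn inside $f$. Since every new edge lies in the interior of some face of $G$, edges assigned to different faces cannot cross, so it suffices to handle one face $f$ at a time.

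The engine of the proof is a structural fact that I would isolate as a separate lemma: for every face $f$, letting $j(f)$ be the smallest index for which $L_{j(f)}$ has a vertex on the boundary of $f$, every vertex on the boundary of $f$ lies in $L_{j(f)}\cup L_{j(f)+1}$. To prove it, delete $L_1,\dots,L_{j(f)-1}$; none of these vertices lies on $f$, so the region $f$ and its entire boundary survive, and in the new drawing the onion peels are $L_{j(f)},L_{j(f)+1},\dots$. If $f$ is now the outer face, all of its vertices are in $L_{j(f)}$; otherwise $f$ is an inner face incident to a vertex of the current outermost peel $L_{j(f)}$, and since removing that peel merges every face incident to one of its vertices into the new outer face, the surviving vertices of $f$ all land in $L_{j(f)+1}$. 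Consequently, writing $j=j(f)$, any deficient vertex lying on $f$ must be in $L_{j+1}$ (a vertex of $L_j$ with $j>1$ would need a neighbor in $L_{j-1}$, which $f$ does not touch, and vertices of $L_1$ are never deficient), and the neighbor it needs is a vertex of $L_j$ on $f$.

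Given this, the routing inside a fixed face $f$ is routine. Let $j=j(f)$ and fix an orientation of the boundary walk of $f$, which contains at least one vertex of $L_j$. For each deficient vertex $v$ on $f$ (necessarily in $L_{j+1}$), follow the boundary from $v$ in the chosen direction until reaching the first vertex $x_v\in L_j$, and add the edge $vx_v$ drawn inside $f$ alongside that boundary arc. Any two such arcs are nested or have disjoint interiors: the only endpoint of such an arc that lies in $L_j$ is its terminal vertex, and everything strictly inside it lies in $L_{j+1}$, so two of these arcs cannot interleave; hence the corresponding chords do not cross. Performing this for every face yields the required augmentation. (If $G$ is disconnected and $f$ is not a disk, one first adds a chord or two joining the boundary components of $f$ so that each component carrying a deficient vertex also carries a vertex of $L_j$; this is harmless, and the argument above then applies within each component.)

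I expect the structural lemma to be the only real obstacle; once we know that each face meets only two consecutive onion peels, the rest is bookkeeping. The delicate points there are that peeling off the outer layers leaves the face $f$ intact (this uses that no vertex on the boundary of $f$ gets deleted) and that removing a single onion peel absorbs every incident face into the outer face — both of which are really statements about the fixed combinatorial embedding.
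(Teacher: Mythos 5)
Your proof is correct, but it takes a genuinely different route from the paper's. The paper processes each inner face $f$ by picking the single vertex $w$ of $f$ lying in the lowest-indexed onion peel and adding a \emph{star} of edges from $w$ to every other vertex of $f$; since a star drawn from one boundary vertex into a face never self-crosses (even when the boundary walk is non-simple or disconnected), planarity is immediate and no routing argument is needed. Combined with Observation~\ref{obs:neighbourOnPreviousOnionPeel}, each $v\in L_i$ then gets its neighbor in $L_{i-1}$ --- though the paper silently uses the fact that the face $f_v$ contains no vertex of $L_j$ with $j<i-1$, i.e.\ exactly the ``each face meets only two consecutive peels'' lemma that you isolate and prove explicitly. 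That explicitness is the main thing your version buys: your structural lemma is correct (and is essentially Observation~\ref{obs:neighbourOnPreviousOnionPeel} run in reverse), and your nesting argument for the boundary-following chords is sound --- since everything strictly between a deficient vertex and its target lies in $L_{j+1}$, two such arcs cannot interleave. What your version costs is precisely that crossing analysis, plus the somewhat hand-waved patch for faces with disconnected boundary; the paper's star construction makes both issues evaporate. If you kept your structural lemma but replaced your one-edge-per-deficient-vertex routing with the paper's one-star-per-face construction, you would get the cleanest of both worlds.
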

\begin{proof}
Add edges in any inner face $f$ as follows:  Let $w$ be the vertex of $f$
contained in the onion peel with smallest index among all vertices of $f$
(breaking ties arbitrarily.)  For any vertex $v\neq w$ of $f$, add
an edge $(v,w)$ if it did not exist already.  Clearly this maintains
planarity since all new edges can be drawn inside face $f$.

By Observation~\ref{obs:neighbourOnPreviousOnionPeel}, every vertex $v\in L_i$
(for $i>1$) had an incident face $f_v$ that contained a vertex in
$L_{i-1}$.  When applying the above procedure to face $f_v$ some vertex
$w$ in $L_{i-1}$ is made adjacent to $v$, unless $(w,v)$ already was an edge.
Either way, afterwards $v$ has the neighbor $w\in L_{i-1}$.
\end{proof}

A {\em spanning forest} of $G$ is a subgraph that contains all vertices
of $G$ and has no cycles.  We say that a spanning forest is 
{\em outer-face-rooted}
if every every connected component of it contains
exactly one vertex on the outer-face.    We say that an outer-face-rooted 
spanning forest $F$
has {\em height h} if every vertex $v$ has distance (in $F$) at most $h$
to an outer-face vertex.
See also Figure~\ref{fig:outpl_increases}.

\begin{lemma}
\label{le:koutplAddEdgesSpanningTree}
Let $G$ be a $k$-outerplanar graph.  The we can add edges to
$G$ (while maintaining planarity) such that $G$ has an outer-face-rooted 
spanning forest of height at most $k-1$.
\end{lemma}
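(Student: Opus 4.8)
The plan is to first apply Observation~\ref{obs:makeNeighbourToPreviousOnionPeel} to $G$, so that after adding some edges (while staying planar) every vertex $v\in L_i$ with $i>1$ has a neighbor in $L_{i-1}$. Fix such a neighbor and call it the \emph{parent} $p(v)$ of $v$. Let $F$ be the subgraph consisting of all the edges $(v,p(v))$ for $v\in L_2\cup\dots\cup L_k$. I claim $F$ is the desired spanning forest. It certainly spans $V(G)$: every vertex is an endpoint of some such edge, except vertices in $L_1$, which I allow to be isolated components (or roots) of $F$.

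The first thing to check is that $F$ is acyclic. Here I would argue by the onion-peel index: following parent edges strictly decreases the index $i$, since $p(v)\in L_{i-1}$ whenever $v\in L_i$. Hence there is a well-defined notion of \emph{depth}: define the depth of a vertex in $L_i$ to be $i-1$, so vertices of $L_1$ have depth $0$. Each non-root vertex $v$ has exactly one parent edge going ``up'' one level, and any path in $F$ that goes up must terminate at an $L_1$-vertex within at most $k-1$ steps; a cycle would require the index to both strictly decrease and return to its start, which is impossible. So $F$ is a forest. Each component, traced upward via parent edges, reaches a unique vertex of $L_1$, so the component contains exactly one vertex on the outer-face, and $F$ is outer-face-rooted. (If some component contains no parent edge at all, it is a single $L_1$-vertex, still satisfying the condition.)

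Finally I check the height bound. For $v\in L_i$, following parent edges gives a path $v,p(v),p(p(v)),\dots$ of vertices in $L_i,L_{i-1},\dots,L_1$, which has length $i-1\le k-1$ and ends at an outer-face vertex. So every vertex is within distance $k-1$ in $F$ of an outer-face vertex, i.e. $F$ has height at most $k-1$. This completes the construction; the edge additions needed are exactly those from Observation~\ref{obs:makeNeighbourToPreviousOnionPeel}, which preserve planarity.

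The only subtle point — and the step I expect to require the most care — is verifying that $F$ is genuinely acyclic and outer-face-rooted \emph{simultaneously}: the parent function is not unique, so I must be careful that after fixing one parent per vertex the resulting structure is a forest whose every component hits $L_1$ exactly once. The strict monotonicity of the onion-peel index along parent edges is what makes both properties fall out cleanly, so the crux is to state that monotonicity precisely and use it for acyclicity, the single-root property, and the height bound all at once.
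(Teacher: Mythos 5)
Your proposal is correct and follows essentially the same route as the paper: both first apply Observation~\ref{obs:makeNeighbourToPreviousOnionPeel} and then extract a spanning forest by linking each vertex toward the outer face, giving height at most $k-1$. The only difference is that the paper runs a multi-source breadth-first search from $L_1$ (which yields the forest structure, the one-root-per-component property, and the height bound automatically), whereas you fix an explicit parent in the previous onion peel and verify acyclicity and the single-root property by hand via the strict monotonicity of the peel index.
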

\begin{proof}
First add edges as in Observation~\ref{obs:makeNeighbourToPreviousOnionPeel}.
Now any vertex $v$ in $L_i$, $i\geq 1$ has distance at most $i-1$
from some vertex in $L_1$:  This holds by definition for $i=1$, and holds
by induction for $i>1$, since 
vertex $v$ has a neighbor $w$ in $L_{i-1}$ and
$w$ has distance at most $i-2$ to some vertex in $L_1$.  

Now perform a breadth-first search, starting at all the vertices 
on the outer-face $L_1$.  The resulting breadth-first
search tree $F$ (which is a forest, since we start with multiple vertices)
has one component for each outer-face vertex.  Since 
breadth-first search computes distances from its start-vertices, 
each vertex has distance at most $k-1$ from a root of $F$ and so
$F$ has height at most $k-1$.
\end{proof}

\begin{lemma}
\label{lem:spanningTreeSmallHeightkoutpl}
Let $G$ be a planar graph that (for some fixed planar embedding
and outer-face) has an outer-face-rooted spanning
forest $F$ of height $k-1$.
Then $G$ is $k$-outerplanar.
\end{lemma}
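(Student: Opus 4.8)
The plan is to show that each onion peel $L_i$ of $G$ consists only of vertices at distance at most $i-1$ from a root of $F$; since $F$ has height $k-1$, this forces every vertex to be removed within $k$ onion peels, giving $k$-outerplanarity. I would prove the key claim by induction on $i$. The base case $i=1$ is immediate: the roots of $F$ are exactly the outer-face vertices, which are exactly the vertices of $L_1$, all at distance $0$. For the inductive step, suppose every vertex at distance at most $i-2$ from a root lies in $L_1 \cup \dots \cup L_{i-1}$; I want to show every vertex $v$ at distance exactly $i-1$ from a root lies in $L_1 \cup \dots \cup L_i$, i.e.\ is removed by the time we have peeled off the first $i$ layers.

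The heart of the argument is the following: let $v$ be a vertex at distance exactly $i-1$ from a root, and let $P = (r = u_0, u_1, \dots, u_{i-1} = v)$ be the tree path in $F$ from its root $r$ down to $v$. By the inductive hypothesis, $u_0, \dots, u_{i-2}$ all belong to $L_1 \cup \dots \cup L_{i-1}$ (the path lengths to these vertices are at most $i-2$, and a tree path of a BFS-like forest has each prefix being the tree path to that vertex, hence of the claimed length — but here $F$ is arbitrary, so I instead note that the distance from $r$ to $u_j$ is at most $j$ because $P$ restricted to $u_0,\dots,u_j$ is a path of length $j$ in $F$). Now consider the graph $H_{i-1}$ obtained from $G$ by deleting $L_1 \cup \dots \cup L_{i-1}$. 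After this deletion, the parent edge $(u_{i-2}, v)$ is gone, and more importantly $v$'s parent $u_{i-2}$ lay on the outer-face of the drawing just before $L_{i-1}$ was peeled. The claim is that $v$ itself is now on the outer-face of $H_{i-1}$: walking from $v$ along its tree-edge toward $u_{i-2}$ reaches a vertex that was incident to the outer-face of $G - (L_1\cup\dots\cup L_{i-2})$, and the face of $G$ containing the edge $(u_{i-2},v)$ merges into the outer-face once $L_{i-1}$ (and hence $u_{i-2}$) is removed, exposing $v$. I would make this precise by arguing about the face of $G$ on (say) the clockwise side of the tree-edge $(u_{i-2}, v)$: every vertex of that face other than $v$ and possibly $v$'s tree-descendants either is $u_{i-2}$ or is connected to the outer face through the already-removed layers, so after removing $L_1,\dots,L_{i-1}$ the vertex $v$ sits on the outer-face.

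The main obstacle is exactly this last geometric step: a general spanning forest $F$ need not be a BFS forest, so I cannot simply invoke ``BFS computes distances.'' I need a clean topological argument that removing the first $i-1$ onion peels exposes every vertex whose $F$-parent was in peel $i-1$. The cleanest route is probably a direct induction on $i$ proving simultaneously that (a) every vertex at $F$-distance $< i$ from a root is in $L_1 \cup \dots \cup L_i$, using the fact that once the parent $u_{i-2} \in L_{i-1}$ is deleted, the edge $(u_{i-2},v)$ was on some face $f$ of $G$, and every vertex of $f$ other than $v$ is (by a separate sub-induction on the face, or by the inductive hypothesis applied to parents) already removed, so $f$ has merged with the outer face and $v$ is exposed. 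Care is needed because $v$ may have several tree-children on $f$; but all of those are at $F$-distance $i$ from a root, so they are handled one layer later, and this does not obstruct $v$ itself from joining the outer face. Once the claim is established for all $i \le k$, every vertex has $F$-distance at most $k-1 < k$ from a root, hence lies in $L_1 \cup \dots \cup L_k$, so $G$ has at most $k$ onion peels and is $k$-outerplanar. \qed
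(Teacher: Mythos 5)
Your proof is correct and takes essentially the same route as the paper: induct on the $F$-distance from the roots, using the topological fact that deleting an outer-face vertex merges all of its incident faces into the outer face and hence exposes every one of its neighbours, so each vertex at distance $i-1$ from a root lands in $L_1\cup\dots\cup L_i$ and all $k$ peels suffice. Two cosmetic points: your opening sentence states the key containment backwards (you write $L_i\subseteq\{v: d_F(v)\le i-1\}$, whereas what your induction actually --- and correctly --- proves is $\{v: d_F(v)\le i-1\}\subseteq L_1\cup\dots\cup L_i$), and the parent $u_{i-2}$ need not lie in $L_{i-1}$ itself but only in some $L_j$ with $j\le i-1$, which exposes $v$ no later than peel $j+1\le i$; the simple ``neighbours of a deleted outer-face vertex become outer-face vertices'' fact makes your worries about the other vertices of the face $f$ unnecessary.
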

\begin{proof}
Root each connected component $T$ of $F$ at the vertex on the outer-face.
Removing the outer-face $L_1$ then removes the root of each tree $T$.  
After the roots have been removed,
all their children appear on the outer-face of what remains. 
So all children of the roots
are in $L_2$.  (There may be other vertices in $L_2$ as well.)
Continuing the argument
shows that the vertices at distance $i$ from the roots
are in onion peel $L_{i+1}$ or in one of
earlier onion peels $L_1,\dots,L_i$.  
Therefore $G$ has at most $k$ non-empty onion peels and it is $k$-outerplanar.
\end{proof}

\begin{theorem}
\label{th:koutplMakeInnerTriangulated}
Any $k$-outerplanar graph $G$ can be converted into a $k$-outerplanar
triangulated disk by adding edges. 
\end{theorem}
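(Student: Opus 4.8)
The plan is to reduce the claim to the spanning-forest characterisation of $k$-outerplanarity supplied by Lemmas~\ref{le:koutplAddEdgesSpanningTree} and~\ref{lem:spanningTreeSmallHeightkoutpl}. The key observation is that \emph{adding an edge to $G$ leaves any fixed spanning forest of $G$ untouched} (we never put the new edge into the forest), so the forest's components and its height do not change; hence, if in addition the set $L_1$ of outer-face vertices never changes, then an outer-face-rooted spanning forest stays outer-face-rooted. So I would add edges to $G$ in stages — first to equip it with an outer-face-rooted spanning forest $F$ of height at most $k-1$ via Lemma~\ref{le:koutplAddEdgesSpanningTree}, then to turn it into a triangulated disk — all the while keeping $L_1$ fixed; applying Lemma~\ref{lem:spanningTreeSmallHeightkoutpl} to the final graph, which still carries $F$, then certifies that it is $k$-outerplanar.

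Concretely (and assuming $|V(G)|\ge 3$, since no triangulated disk exists otherwise): (1) if $G$ is disconnected, repeatedly add an edge drawn inside some face and joining two current components; such an edge is a bridge, so it splits no face, its endpoints were and stay on that face, and $L_1$ does not change — repeat until $G$ is connected. (2) Apply Lemma~\ref{le:koutplAddEdgesSpanningTree} to obtain $F$; its proof only inserts edges inside inner faces, so $L_1$ is untouched. (3) Make $G$ $2$-connected: while some face $f$ has a vertex $v$ occurring at least twice on its boundary walk, take one such occurrence, with local sub-walk $a,v,b$ on $\partial f$, and add the edge $(a,b)$ inside $f$ right next to the path $a,v,b$. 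This cuts a triangular face $avb$ off $f$ and removes exactly that one occurrence of $v$ from $\partial f$ while leaving $a,v,b$ on $\partial f$; if $f$ is an inner face this trivially keeps $L_1$ fixed, and if $f$ is the outer face then $v$ stays on it (it still has another occurrence there), as do $a$ and $b$, and no other vertex is touched, so again $L_1$ is unchanged. The total number of repeated vertex-occurrences over all face boundaries drops by one per step, so this terminates with $G$ $2$-connected and every face a simple cycle. (4) Triangulate each inner face $v_1v_2\cdots v_m$ with $m\ge 4$ by repeatedly inserting whichever of the chords $v_1v_3$, $v_2v_4$ is not yet an edge — at least one is absent, because on the sphere the complement of the face is a closed disk with $v_1,v_2,v_3,v_4$ in this cyclic order on its boundary, so the two chords would be forced to cross — until every inner face is a triangle; each added edge lies in an inner face, so $L_1$ stays fixed.

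At this point every inner face of $G$ is a triangle and the outer face, being a face of a $2$-connected plane graph on at least $3$ vertices, is a simple cycle, so $G$ is a triangulated disk. Since $L_1$ never changed, $F$ is still an outer-face-rooted spanning forest of $G$ of height at most $k-1$, and Lemma~\ref{lem:spanningTreeSmallHeightkoutpl} then gives that $G$ is $k$-outerplanar, as required.

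The step I expect to be the main obstacle is (3): checking that bypassing a repeated vertex of the outer-face boundary never pushes a vertex into the interior. The crucial points are that such a vertex is traversed at least twice by the outer boundary walk — so deleting one of its occurrences, which is the net effect of adding the short chord $(a,b)$, still leaves it on that boundary — and that the modification involves only the three vertices $a,v,b$ of the deleted sub-walk. Connecting components, bounding the termination, and triangulating simple faces are all routine.
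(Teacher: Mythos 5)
Your proposal is correct and follows essentially the same route as the paper: obtain an outer-face-rooted spanning forest of height $k-1$ via Lemma~\ref{le:koutplAddEdgesSpanningTree}, then add all further edges (connecting components, eliminating repeated boundary occurrences, triangulating inner faces) while keeping the outer-face vertex set fixed, so that the forest still certifies $k$-outerplanarity through Lemma~\ref{lem:spanningTreeSmallHeightkoutpl}. You merely reorder the connectivity step and spell out the ``standard techniques for triangulating'' that the paper leaves implicit, which is a harmless (indeed slightly more detailed) variant of the same argument.
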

\begin{proof}
Add edges to $G$ (while maintaining
planarity) until it has an outer-face-rooted spanning forest $F$ of
height $k-1$ (Lemma~\ref{le:koutplAddEdgesSpanningTree}).
While the outer-face is disconnected,
add an edge between two vertices on the outer-face of different
connected components.
While the outer-face has a vertex $v$ that appears on it multiple times, add
an edge between two neighbors of $v$ on the outer-face.
 Finally, add more edges to $G$ (with the standard
techniques for triangulating) until all interior faces are triangles.
Note that none of these edges additions
removes any vertex from the outer-face.  So we end with a triangulated
disk $D$ whose outer-face vertices are the same as the ones on $G$.
In particular, $F$ is an outer-face-rooted spanning forest of $D$ as
well, and it still has height $k-1$.
By Lemma~\ref{lem:spanningTreeSmallHeightkoutpl} $D$ is
$k$-outerplanar as desired.
\end{proof}

\begin{corollary}
\label{th:koutplMakeTriangulated}
Any $k$-outerplanar graph $G$ can be triangulated 
such that the result has outer-planarity at most $k+1$.
\end{corollary}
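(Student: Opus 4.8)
The plan is to deduce Corollary~\ref{th:koutplMakeTriangulated} almost immediately from Theorem~\ref{th:koutplMakeInnerTriangulated}, so the ``proof'' is really just an observation about what happens when we triangulate the one remaining non-triangular face, namely the outer-face. First I would apply Theorem~\ref{th:koutplMakeInnerTriangulated} to convert $G$ into a $k$-outerplanar triangulated disk $D$ by adding edges; at this point every inner face of $D$ is a triangle and the outer-face is a simple cycle $C$. It remains to triangulate $C$ itself. The standard way is to pick any vertex $r$ on $C$ and add a chord from $r$ to every other vertex of $C$ that is not already adjacent to it (a ``fan'' triangulation); this can be drawn in the outer-face without crossings, and it turns every face formerly incident to the outer-face into a triangle while creating a new outer-face that is also a triangle. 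The result $G'$ is a triangulation of $G$ (all faces, including the new outer-face, are triangles), and it was obtained from $G$ only by adding edges, as required.

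The only thing to check is the outer-planarity bound. Here I would argue directly about onion peels. In $D$ the onion peels are $L_1,\dots,L_k$ with $L_1=C$. Adding the fan edges in the outer-face of $D$ does not remove any vertex from any face other than the outer-face, so for $i\geq 1$ the graph induced by $L_{i+1}\cup L_{i+2}\cup\cdots$ is unchanged between $D$ and $G'$, except that $G'$ has the extra outer-face-vertex $r$ if we... actually, more cleanly: in $G'$ the first onion peel is $\{r, a, b\}$ where $a,b$ are the two neighbours of $r$ along the fan that bound the new outer triangle; removing these three vertices leaves a graph that is a subgraph of $D$ (we have deleted $r,a,b$ from $D$ together with the fan chords), and in particular a subgraph of $D$ with at most the vertices $V(D)\setminus\{r,a,b\}$. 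Since $r\in L_1(D)$ and $a,b\in L_1(D)$ as well (the fan chords go between outer-face vertices of $D$), after removing them the remaining onion peels of $G'$ are contained in $L_2(D),L_3(D),\dots,L_k(D)$ together with whatever of $L_1(D)\setminus\{r,a,b\}$ survives in the outer-face; the point is that every vertex of $D$ lies in some $L_j(D)$ with $j\le k$, hence lies in some onion peel of $G'$ with index at most $k+1$. Therefore $G'$ is $(k+1)$-outerplanar.

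Alternatively, and perhaps more in the spirit of the preceding lemmas, I would use the spanning-forest characterisation. By the construction in Theorem~\ref{th:koutplMakeInnerTriangulated}, $D$ has an outer-face-rooted spanning forest $F$ of height $k-1$; in fact, since the outer-face of $D$ is a single simple cycle $C$, we may take $F$ to be a single tree rooted at one vertex. Now root the fan at that same root $r$: then $F$ together with the fan chords still contains a spanning tree of $G'$, rooted at $r$, in which every vertex of $C\setminus\{r\}$ is a child of $r$ and every other vertex has its $F$-distance increased by at most... no --- the cleanest statement is that $G'$ has an outer-face-rooted spanning forest of height at most $k$: keep $F$, but note the new outer-face of $G'$ has three vertices, and re-rooting costs one extra level. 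Applying Lemma~\ref{lem:spanningTreeSmallHeightkoutpl} with $k$ replaced by $k+1$ then gives that $G'$ is $(k+1)$-outerplanar.

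I do not expect any real obstacle here; the corollary is essentially bookkeeping on top of Theorem~\ref{th:koutplMakeInnerTriangulated}. The one subtlety worth stating carefully is that triangulating the outer-face genuinely can force the outer-planarity up by one (this is exactly what Theorem~1 shows is sometimes necessary), so one should not hope to do better, and the argument above should make clear that it is never worse than $k+1$. I would therefore write the proof in two sentences: apply Theorem~\ref{th:koutplMakeInnerTriangulated} to get a $k$-outerplanar triangulated disk, then fan-triangulate the outer-face and observe via Lemma~\ref{lem:spanningTreeSmallHeightkoutpl} (with height $k$) that the outcome is $(k+1)$-outerplanar.
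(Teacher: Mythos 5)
Your overall strategy is the paper's: convert $G$ to a $k$-outerplanar triangulated disk $D$ via Theorem~\ref{th:koutplMakeInnerTriangulated}, fan the outer cycle $C$ from a single vertex $r$, and then argue by onion-peel bookkeeping that the outer-planarity grows by at most one. The bookkeeping itself is fine (your first argument is essentially the paper's induction $L_i\subseteq L_0'\cup\dots\cup L_i'$, driven by the fact that $r$ lies on the new outer-face and is adjacent to all of $C$). But there is one genuine gap in the construction step: you pick an \emph{arbitrary} vertex $r$ of $C$ and add chords only to the vertices of $C$ ``not already adjacent to it.'' If $r$ already has a chord of $C$ to some vertex $v_j$ (necessarily drawn inside $C$, since $D$ is a triangulated disk), then you do not add $(r,v_j)$ in the exterior region, and the exterior face between the two nearest fan edges flanking $v_j$ is a quadrilateral (or worse), not a triangle. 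The smallest failure is $D=K_4$ minus the edge $(v_2,v_4)$, with outer cycle $v_1v_2v_3v_4$ and interior chord $(v_1,v_3)$: choosing $r=v_1$, your procedure adds nothing, and the outer-face remains a $4$-cycle. So the claim that the fan ``turns every face formerly incident to the outer-face into a triangle'' is false for arbitrary $r$.

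The paper closes exactly this hole by choosing $r$ to be a vertex of $C$ with only two neighbours among the outer-face vertices; such a vertex exists because the subgraph induced by the outer-face vertices is a $2$-connected outerplanar graph and hence has a degree-$2$ vertex. With that choice every fan chord is genuinely new, the exterior region decomposes into triangles, and the rest of your argument goes through verbatim. A second, smaller, wrinkle concerns your alternative route via Lemma~\ref{lem:spanningTreeSmallHeightkoutpl}: hanging all the old roots below $r$ produces a single spanning tree of height at most $k$, but that tree contains all three vertices of the new outer triangle, so it is not ``outer-face-rooted'' under the paper's definition (exactly one outer-face vertex per component). The lemma's proof still works with minor adaptation, but the paper's direct onion-peel induction avoids having to reprove anything, which is presumably why it is the route taken there.
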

\begin{proof}
First convert $G$ into a triangulated disk $D$ that is
$k$-outerplanar.  Now pick one vertex $r$ on the outer-face
of $D$ that has only two neighbors on the outer-face on $r$.
This exists because the outer-face induces a 2-connected 
outer-planar graph; such graphs have a degree-2 vertex. 
Make $r$ adjacent to all other vertices on the outer-face.
Clearly the result $G'$ is a triangulated graph.  
Also, if
$L_0',L_1',\dots$  are the onion peels of $G'$, then
$r\in L_0'$, any neighbors of $r$ (and in particular therefore
all of $L_1$) is in $L_0'\cup L_1'$, and by induction any
vertex in $L_i$ is in $L_0'\cup \dots \cup L_i'$.  Therefore $G'$
has at most $k+1$ onion peels as desired.
\end{proof}

\section{Treewidth of $k$-outerplanar graphs}
\label{se:app}
\label{se:appl}

It is well-known that any $k$-outerplanar graph
has treewidth at most $3k-1$ \cite{Bod88,Bod98} and
this bound is tight \cite{KT09}.  
(We will not review the definition
of treewidth here, since we will only use the closely related
concept of branchwidth.)  This has important algorithmic
consequences: many (normally NP-hard) problems
can be solved in polynomial time on $k$-outerplanar graphs, 
which allows for a PTAS for many problems in planar graphs 
(see Baker \cite{Baker94}), or for
solving graph isomorphism and related problems
efficiently in planar graphs (see Eppstein \cite{Eppstein99}.)

The proof in \cite{Bod98} is non-trivial and in particular
requires first converting the $k$-outerplanar graph $G$ into
a $k$-outerplanar graph $H$ with maximum degree 3 such that
$G$ is a minor of $H$.  A detailed discussion
(and analysis of the linear-time complexity to find the tree
decomposition) is given in \cite{Kat13}.
A second, different, proof can be derived from Tamaki's theorem
\cite{Tam03} that shows that the branchwidth of a graph is
bounded by the radius of the face-vertex-incidence graph.  But
this proof is not straightforward either, as it requires detours
into the medial graph  and the carving width.

Our result on triangulating $k$-outerplanar graphs, in conjunction
with some results of Eppstein concerning tree decompositions of
graphs with small diameter \cite{Eppstein99}, allows for a different
(and in our opinion simpler) proof that every $k$-outerplanar
graph has treewidth at most $3k-1$.  We explain this in the
following.

We first need to define a closely related concept, the {\em branchwidth}.

\begin{definition}
A {\em branch decomposition} of a graph $G$ is a tree $T$ that has 
maximum degree 3, together with an injective
assignment of the edges of $G$ to the leaves of $T$.
In such a branch decomposition,
a vertex $v$ of $G$ is said to {\em cross} an arc $a$ of $T$ if
two incident edges of $v$ are assigned to leaves in two different
components of $T-a$.  The branch decomposition is said to have
{\em width} $w$ if any arc $a$ of $T$ is crossed by at most $w$ vertices.
The {\em branchwidth} of a graph $G$ is the minimum width of a branch
decomposition of $G$.
\end{definition}

The following lemma relates the branchwidth of a planar graph $G$ to the height
of an outer-planar-rooted spanning forest $F$ of $G$.  It is strongly inspired
by Lemma 4 of \cite{Eppstein99} (which in turn was inspired by \cite{Baker94}):

\begin{lemma}
Let $G$ be a triangulated disk with an outer-face-rooted spanning forest $F$
of height $h-1$.  Then $G$ has branchwidth at most $2h$.
\end{lemma}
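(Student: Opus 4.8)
The plan is to build a branch decomposition directly from the structure of the triangulated disk $G$ together with the spanning forest $F$, following the strategy of Lemma~4 of \cite{Eppstein99}. The key observation is that in a triangulated disk each interior face is a triangle, so its three bounding edges naturally group together, and the dual-tree structure of these triangles (more precisely, the structure obtained by cutting $G$ along the edges of $F$) yields a tree of bounded ``width.'' First I would augment $F$ so that, together with some edges on the outer-face, it becomes a spanning tree $T^*$ of $G$; then the complementary edges $E(G)\setminus E(T^*)$ are in bijection with the bounded faces of $G$ via the usual planar-duality argument (each non-tree edge, together with the tree path between its endpoints, encloses a region, and contracting $T^*$ shows the non-tree edges form the edge set of a tree on the faces). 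This gives a tree $\mathcal T$ whose nodes are the bounded faces of $G$; since every bounded face is a triangle, $\mathcal T$ has maximum degree at most $3$, which is exactly what a branch decomposition needs.

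Next I would turn $\mathcal T$ into a genuine branch decomposition by assigning each edge of $G$ to a leaf. Every edge of $G$ is incident to at most two bounded faces; I attach a small gadget (a path or claw of a few extra degree-$\le 3$ nodes) near the appropriate node of $\mathcal T$ so that each of the $|E(G)|$ edges of $G$ gets its own leaf, being careful to keep maximum degree $3$ throughout. This is routine but needs to be done so that the ``width'' is not inflated: an arc $a$ of the resulting tree, after we delete it, separates the edges of $G$ into two sets, and the vertices crossing $a$ are exactly the vertices incident to edges on both sides. Because the tree $\mathcal T$ was built from planar duality relative to the spanning tree $T^*$, cutting an arc corresponds to a closed curve in the plane that crosses $G$ only along a bounded collection of tree edges plus the one non-tree edge dual to $a$; the vertices crossing $a$ all lie on the fundamental cycle(s) associated with that cut.

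The heart of the argument — and the step I expect to be the main obstacle — is the height bound. I must show that along any such separating curve, the number of vertices it ``picks up'' is at most $2h$. The curve follows a path in $T^*$ between two endpoints of a non-tree edge; since $T^*$ restricted to $F$ has height $h-1$, any root-to-vertex path in $F$ has at most $h$ vertices, so a path between two vertices of $F$ that goes up to a common ancestor region and back down has at most $2h$ vertices on it (and the few extra outer-face edges we added to make $T^*$ a spanning tree contribute $O(1)$, which one absorbs by being slightly more careful — e.g., by choosing those added outer-face edges so the fundamental cycles through them are short, or by noting the outer-face vertices are already at height $0$). Formalizing ``the vertices crossing arc $a$ are contained in the fundamental cycle of the non-tree edge dual to $a$, which has length at most $2h$'' is the crux; once that is established, every arc is crossed by at most $2h$ vertices, giving branchwidth at most $2h$, as claimed. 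I would present this by first handling arcs inside the gadgets (where the crossing set is a subset of that of the nearby ``real'' arc, hence no worse), and then the real arcs via the fundamental-cycle/height computation.
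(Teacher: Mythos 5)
Your overall construction coincides with the paper's: extend $F$ by outer-face edges to a spanning tree, apply tree--cotree duality to obtain a tree on the inner faces whose maximum degree is $3$ because every inner face is a triangle, hang one leaf per edge of $G$ off it, and bound the width of each arc by a separating curve. The gap is precisely in the step you flag as the crux. Consider an arc dual to a non-tree edge $e=(v_1,v_2)$ whose endpoints lie in \emph{different} components of $F$. The fundamental cycle of $e$ in your spanning tree consists of the two root paths $P_1,P_2$ (at most $h-1$ edges each), the edge $e$, \emph{and} the entire stretch of the outer-face between the two roots $r_1$ and $r_2$. Since $F$ is outer-face-rooted it has one component per outer-face vertex, so making it a spanning tree forces you to add \emph{all but one} of the outer-face edges --- not ``a few'' --- and the $r_1$--$r_2$ stretch of the fundamental cycle can contain $\Omega(n)$ vertices. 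Neither of your proposed patches closes this: there is no freedom in ``choosing the added outer-face edges so the fundamental cycles are short'' (the outer cycle dictates the tree path between $r_1$ and $r_2$ up to which single edge you omit), and the fact that outer-face vertices have height $0$ does not by itself exclude them from crossing the arc. As written, the bound you get is $2h$ plus the length of an outer-face arc, not $2h$.

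The repair is to abandon the fundamental cycle in this case: $P_1\cup\{e\}\cup P_2$ is already a path from the outer-face to the outer-face, so closing it through the unbounded region (which meets no vertices) yields a Jordan curve whose two sides are exactly the two components of the cut arc. Hence only the at most $2h$ vertices of $P_1\cup\{e\}\cup P_2$ can cross; an outer-face vertex strictly between $r_1$ and $r_2$ has all its incident inner faces, and therefore all its incident edge-leaves, on one side. (When $v_1$ and $v_2$ lie in the same component of $F$ your fundamental-cycle computation is fine and gives at most $2h-1$.) With that substitution the argument goes through and is the paper's proof.
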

\begin{proof}
Let $G^*$ be
the dual graph of $G$.  Let $T^*$ be a subgraph of $G^*$ defined
as follows:  $T^*$ contains all vertices of $G^*$ (= faces of $G$),
except for the outer-face of $G$.  It also contains the duals of all
edges of $E$ that are not in $F$ and not on the outer-face of $G$.
See also Figure~\ref{fi:RadiusBranchwidth}(left).

We claim that $T^*$ is a tree.  This can be seen as follows.  Define
$F^+$ to be the subgraph of $G$ formed by the edges of $F$, as well as all
but one edge on the outer-face.  Since $F$ is an outerface-rooted forest, 
$F^+$ is a spanning tree of $G$.  By the well-known tree-co-tree result
(\cite{Tutte84}, p.289) therefore the duals of the edges not in $F^+$ form
a spanning tree $T^+$ of the dual graph.  The outer-face-vertex is a leaf 
in $T^+$ by definition of $F^+$.  Deleting this leaf from $T^+$
yields exactly $T^*$, which therefore is a tree.

%

We will use $T^*$ (with some additions) as the tree for the branch
decomposition.  See also Figure~\ref{fi:RadiusBranchwidth}.
A node of $T^*$ will be called {\em face-node}
and denoted $n(f)$ if it corresponds to the inner face $f$ of $G$.
Let $T_1$ be the tree obtained from $T^*$
by subdividing each arc $a$ of $T^*$ with an {\em arc-node} $n(a)$.
Let $T_2$ be the tree obtained from $T_1$
by adding an {\em edge-node} $n(e)$ for every edge $e$ of $G$.  
If the dual edge $e^*$ of $e$ is an arc of $T^*$, then
make $n(e)$ adjacent to the arc-node $n(e^*)$; note
that $n(e^*)$ had degree 2 before and is used for exactly one $n(e)$,
so it has degree 3 now.  If the dual edge of $e$ is not in $T^*$,
then either $e$ is on the outer-face or $e$ belongs to $F$.
In both cases, pick an inner face $f$ incident to $e$ and make
$n(e)$ adjacent to $n(f)$.  Notice that in $T_2$ node $n(f)$
has at most one incident arc for each edge of $f$,
therefore $n(f)$ has degree at most 3.  

\begin{figure}[ht]
\hspace*{\fill}
\includegraphics[width=70mm,page=3]{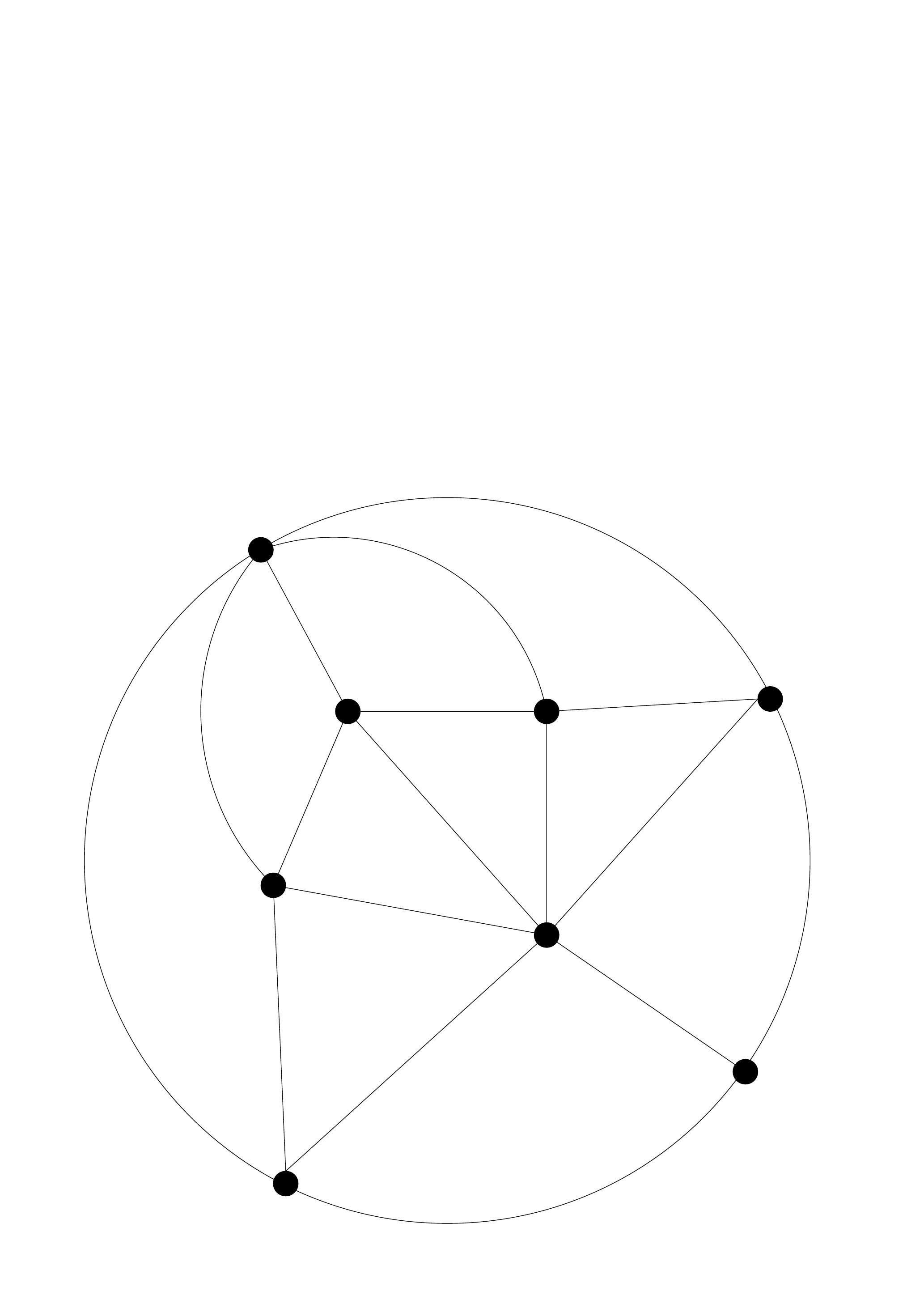}
\hspace*{\fill}
\includegraphics[width=70mm,page=6]{branchwidth_tree.pdf}
\hspace*{\fill}
\caption{A branch decomposition obtained from an outer-face-rooted
spanning forest.  (Left) Tree $T^*$.  $G$ is solid; forest-edges are thick.
Face-nodes are white circles and arcs of $T^*$ are dashed.
(Right) Tree $T_2$. Arc-nodes are white squares,
edge-node $n(e)$ is a small square drawn near edge $e$.  The labels
indicate the path $r_1-v_1-v_2-r_2$ from outer-face to outer-face
defined by an arc $a$ of $T_2$. }
\label{fi:RadiusBranchwidth}
\end{figure}

We use tree $T_2$ for the branch decomposition and assign
edge $e$ of $G$ to node $n(e)$.  We have already
argued that $T_2$ has maximum degree 3, so it is a branch decomposition,
and it only remains to analyze its width. 
Let $a$ be an arc of $T_2$.  If $a$ is incident to a node $n(e)$
of $T_2$, then only the vertices of $e$ can cross $a$, so at most
$2 \leq 2h$ vertices cross $a$.  If $a$ is not incident to a node
$n(e)$, then it has the form $(n(f),n(e^*))$ for some inner face
$f$ of $G$ and some edge $e=(v_1,v_2)$ that is incident to $f$
and does not belong to $F$.

If $v_1$ and $v_2$ are in different connected components of $F$, then
for $j=1,2$, let $P_j$ be the path from $v_j$ to the 
outer-face vertex $r_j$ in $v_j$'s component of $F$.
Observe that $P_1$ and $P_2$ are disjoint, and therefore 
$P_1\cup \{e\} \cup P_2$ is a path from outer-face
to outer-face that splits the inner faces of $G$ into two parts, namely,
the two parts corresponding to the two connected components of $T_2-a$.
Any vertex that has incident edges in both those connected components
hence must be on $P_1\cup \{e\} \cup P_2$.  But $P_1$ and $P_2$ contain
at most $h-1$ edges each, so there are at most $2h$ vertices that cross $a$.
Similarly, if $v_1$ and $v_2$ are in the same connected component of $F$,
then let $P$ the path from $v_1$ to $v_2$ in $F$, and observe that $P\cup \{e\}$
forms a cycle that separates the two components of $T_2-a$.  Since
$P$ contains at most $2h-2$ edges, in this case at most $2h-1$ vertices
cross $a$.  

So this branch decomposition has width at most $2h$ as desired.
\end{proof}

Since $tw(G)\leq \max\{1,\lfloor\frac{3}{2}bw(G)\rfloor-1\}$ for the treewidth $tw(G)$
and branchwidth $bw(G)$ of a graph \cite{RS-GMX}, we therefore have:

\begin{corollary}
\label{co:spanningTreeTreewidth}
Let $G$ be a triangulated disk with a outer-face-rooted spanning forest $F$
of height $h-1$.  Then $G$ has treewidth at most $3h-1$.
\end{corollary}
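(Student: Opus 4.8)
The plan is to obtain this immediately from the lemma just proved together with the standard inequality relating treewidth to branchwidth. First I would apply that lemma to the hypotheses at hand: since $G$ is a triangulated disk equipped with an outer-face-rooted spanning forest $F$ of height $h-1$, it gives $bw(G)\le 2h$.

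Next I would feed this into the bound $tw(G)\le\max\{1,\lfloor\frac{3}{2}bw(G)\rfloor-1\}$ quoted above from \cite{RS-GMX}. The map $x\mapsto\lfloor\frac{3}{2}x\rfloor-1$ is nondecreasing, so $bw(G)\le 2h$ yields $tw(G)\le\max\{1,\lfloor 3h\rfloor-1\}=\max\{1,3h-1\}$. An outer-face-rooted spanning forest always has height $h-1\ge 0$, hence $h\ge 1$ and $3h-1\ge 2$, so the maximum equals $3h-1$, as claimed.

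I do not expect any real obstacle here: the content is entirely in the preceding lemma, and the corollary is just a substitution. The only points worth a moment's care are the two boundary issues in the final formula --- confirming that $h\ge 1$ so the ``$1$'' alternative in the $\max$ is inactive, and noting that $2h$ is an integer so the floor causes no mismatch (monotonicity of the bound makes the inequality hold in any case).
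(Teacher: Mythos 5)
Your proof is correct and is exactly the paper's argument: the corollary is stated as an immediate consequence of the branchwidth lemma ($bw(G)\le 2h$) combined with the quoted inequality $tw(G)\leq \max\{1,\lfloor\frac{3}{2}bw(G)\rfloor-1\}$. Your extra checks that $h\ge 1$ and that the floor is harmless are fine but not spelled out in the paper.
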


Since adding edges does not decrease the treewidth, therefore by Lemma
\ref{le:koutplAddEdgesSpanningTree} we have:

\begin{corollary}
\label{co:koutplTreewidth}
Any $k$-outerplanar graph has treewidth at most $3k-1$.
\end{corollary}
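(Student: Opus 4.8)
The final statement to prove is Corollary~\ref{co:koutplTreewidth}: any $k$-outerplanar graph has treewidth at most $3k-1$.

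The plan is to chain together the earlier results. First I would take an arbitrary $k$-outerplanar graph $G$ and apply Lemma~\ref{le:koutplAddEdgesSpanningTree} to obtain a supergraph $G^+$ (formed by adding edges while maintaining planarity) that has an outer-face-rooted spanning forest $F$ of height at most $k-1$. To invoke Corollary~\ref{co:spanningTreeTreewidth}, however, I need a \emph{triangulated disk}, not merely a planar graph with such a forest. So the next step is to further augment $G^+$ into a triangulated disk while preserving $F$: following the recipe in the proof of Theorem~\ref{th:koutplMakeInnerTriangulated}, I would connect the components of the outer-face, eliminate repeated appearances of vertices on the outer-face, and then triangulate all inner faces by standard techniques. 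None of these additions removes a vertex from the outer-face, so $F$ remains an outer-face-rooted spanning forest of the resulting triangulated disk $D$, still of height at most $k-1$.

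With $D$ in hand I would apply Corollary~\ref{co:spanningTreeTreewidth} with $h=k$: since $D$ is a triangulated disk with an outer-face-rooted spanning forest of height $k-1$, it has treewidth at most $3k-1$. Finally, because treewidth is monotone under edge deletion and $G$ is a subgraph of $D$, we conclude $tw(G) \le tw(D) \le 3k-1$, as claimed.

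The only mild subtlety — and the reason the proof is not a one-liner — is the gap between Lemma~\ref{le:koutplAddEdgesSpanningTree} (which only guarantees the spanning forest) and Corollary~\ref{co:spanningTreeTreewidth} (which requires a triangulated disk). One could instead just cite Theorem~\ref{th:koutplMakeInnerTriangulated} to get a $k$-outerplanar triangulated disk directly, but then one must additionally observe that such a triangulated disk, being $k$-outerplanar, admits a height-$(k-1)$ outer-face-rooted spanning forest (again by Lemma~\ref{le:koutplAddEdgesSpanningTree}, noting that adding edges to triangulate is unnecessary since it is already triangulated, so no further edges are added and the forest lives in the triangulated disk itself). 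Either phrasing works; I expect the bookkeeping of "which graph carries which structure" to be the main thing to get right, but there is no genuine obstacle here — it is a routine composition of the machinery already built.
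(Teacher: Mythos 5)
Your proposal is correct and follows essentially the same route as the paper: augment via Lemma~\ref{le:koutplAddEdgesSpanningTree}, complete to a triangulated disk while preserving the height-$(k-1)$ outer-face-rooted spanning forest, apply Corollary~\ref{co:spanningTreeTreewidth}, and conclude by monotonicity of treewidth under edge addition. In fact you spell out the intermediate triangulation step more explicitly than the paper's one-line justification does.
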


Following the steps of our proof, it is easy to see that the
branch decomposition of width $2k$ can be found in linear time,
and from it, a tree decomposition of width $3k-1$ is easily obtained
by following the proof in \cite{RS-GMX}.

\bibliographystyle{plain}
\bibliography{../bib/journal,../bib/full,../bib/gd,../bib/papers}

\begin{thebibliography}{10}

\bibitem{Baker94}
B.~Baker.
\newblock Approximation algorithms for {NP}-complete problems on planar graphs.
\newblock {\em J. ACM}, 41(1):153--180, 1994.

\bibitem{BKK97}
T.~Biedl, G.~Kant, and M.~Kaufmann.
\newblock On triangulating planar graphs under the four-connectivity
  constraint.
\newblock {\em Algorithmica}, 19(4):427--446, 1997.

\bibitem{BR13}
T.~Biedl and L.E.~Ruiz Velazquez.
\newblock Drawing planar 3-trees with given face areas.
\newblock {\em Computational Geometry: Theory and Applications},
  46(3):276--285, 2013.

\bibitem{Bod86}
H.~Bodlaender.
\newblock Classes of graphs with bounded tree-width.
\newblock Technical Report RUU-CS-86-22, Rijskuniversiteit Utrecht, 1986.

\bibitem{Bod88}
H.~Bodlaender.
\newblock Planar graphs with bounded treewidth.
\newblock Technical Report RUU-CS-88-14, Rijksuniversiteit Utrecht, 1988.

\bibitem{Bod98}
Hans~L. Bodlaender.
\newblock A partial {\it k}-arboretum of graphs with bounded treewidth.
\newblock {\em Theor. Comput. Sci.}, 209(1-2):1--45, 1998.

\bibitem{Eppstein99}
David Eppstein.
\newblock Subgraph isomorphism in planar graphs and related problems.
\newblock {\em J. Graph Algorithms Appl.}, 3(3), 1999.

\bibitem{Hea1898}
P.J. Heawood.
\newblock On the four-color map theorem.
\newblock {\em Quart. J. Pure Math.}, 29, 1898.

\bibitem{KT09}
Frank Kammer and Torsten Tholey.
\newblock A lower bound for the treewidth of $k$-outerplanar graphs.
\newblock Technical Report 2009-07, Universit{\"a}t Augsburg, 2009.

\bibitem{KB97}
Goos Kant and Hans~L. Bodlaender.
\newblock Triangulating planar graphs while minimizing the maximum degree.
\newblock {\em Inf. Comput.}, 135(1):1--14, 1997.

\bibitem{Kat13}
Ioannis Katsikarelis.
\newblock Computing bounded-width tree and branch decompositions of
  $k$-outerplanar graphs.
\newblock {\em CoRR}, abs/1301.5896, 2013.

\bibitem{RS-GMX}
Neil Robertson and P.~D. Seymour.
\newblock Graph minors. {X}. {O}bstructions in tree-decompositions.
\newblock {\em J. Combin. Theory Ser. B}, 52:153--190, 1991.

\bibitem{Tam03}
Hisao Tamaki.
\newblock A linear time heuristic for the branch-decomposition of planar
  graphs.
\newblock In {\em European Symposium on Algorithms (ESA'03)}, volume 2832 of
  {\em Lecture Notes in Computer Science}, pages 765--775. Springer, 2003.

\bibitem{Tutte84}
W.T. Tutte.
\newblock {\em Graph Theory}.
\newblock Addison-Wesley, 1984.

\end{thebibliography}


\end{document}